\documentclass[conference]{IEEEtran}
\IEEEoverridecommandlockouts
\usepackage[utf8]{inputenc}
\usepackage[T1]{fontenc}
\usepackage{url}
\usepackage{ifthen}
\usepackage{cite}
\usepackage[cmex10]{amsmath} % Use the [cmex10] option 
\usepackage{amsmath, amssymb, amsthm}
\usepackage{bbm}
\usepackage{graphicx}
\usepackage[short]{optidef}
\usepackage{hyperref}
\usepackage{color}
\usepackage{algorithm, algpseudocode}

\newtheorem{lemma}{Lemma}

\begin{document}

\title{Optimization of Sparse VLSF Codes for Short-Packet Transmission via Saddlepoint Methods
\thanks{
This work is supported in part by the French National Agency for Research (ANR) through the project ANR-23-CMAS-0023 of the RIS3 program and the project ANR-22-PEFT-0010 of the France 2030 program PEPR Réseaux du Futur; and in part by the Agence de l'innovation de défense (AID) through the project UK-FR 2024352.}
}

\author{%
 \IEEEauthorblockN{Guodong Sun\IEEEauthorrefmark{1}, Samir M. Perlaza\IEEEauthorrefmark{1}\IEEEauthorrefmark{2}\IEEEauthorrefmark{3}, Philippe Mary\IEEEauthorrefmark{4}, Jean-Marie Gorce\IEEEauthorrefmark{5}}

 \IEEEauthorblockA{ Emails: \{guodong.sun, samir.perlaza, jean-marie.gorce\}@inria.fr,  philippe.mary@insa-rennes.fr
             }
 
 \IEEEauthorblockA{\IEEEauthorrefmark{1} Centre Inria d’Université Côte d’Azur, Inria, Sophia Antipolis, France
             }
     \IEEEauthorblockA{\IEEEauthorrefmark{2} Laboratoire GAATI, Université de la Polyn\'{e}sie fran\c{c}aise, Fa`a`\={a}, French Polynesia
             } \IEEEauthorblockA{\IEEEauthorrefmark{3} ECE Dept, Princeton University, Princeton, 08544 NJ, USA
             }         
  \IEEEauthorblockA{\IEEEauthorrefmark{4} Univ. Rennes, INSA, CNRS, IETR UMR 6164 F-35000, Rennes, France    }  
                 	
  \IEEEauthorblockA{\IEEEauthorrefmark{5} Inria, INSA Lyon, CITI Laboratory, UR3720, Villeurbanne, France
             }
}

% \author{%
%  \IEEEauthorblockN{Guodong Sun\IEEEauthorrefmark{1}, Samir M. Perlaza\IEEEauthorrefmark{1}, Philippe Mary\IEEEauthorrefmark{2}, Jean-Marie Gorce\IEEEauthorrefmark{3}}
 
%  \IEEEauthorblockA{\IEEEauthorrefmark{1} INRIA Université Côte d'Azur,
%              2004 Route des Lucioles, 06560 Valbonne, France\\
%              % email: guodong.sun|samir.perlaza@inria.fr
%              }
             
%   \IEEEauthorblockA{\IEEEauthorrefmark{2} INSA Rennes,
%   20 Avenue des Buttes de Coesmes, 35700 Rennes, France
%              % email: philippe.mary@insa-rennes.fr
%              }  
                 	
%   \IEEEauthorblockA{\IEEEauthorrefmark{3} INRIA Lyon,
%   56 Boulevard Niels Bohr, 69100 Villeurbanne, France
%              % email: jean-marie.gorce@inria.fr
%              }
% }
\maketitle

%\thanks{ This work has received a French government support granted to the Cominlabs excellence laboratory and managed by the National Research Agency in the ``Investing for the Future'' program under reference ANR-10-LABX-07-01 and also by the program \emph{France 2030} under reference ANR-23-CMAS-0001. This work was also funded by the Brittany region.}

\begin{abstract}
In this work, we present an optimization framework for sparse variable-length stop-feedback (VLSF) codes based on a saddlepoint approximation, which jointly optimizes the decoding configuration parameters.
Thanks to the analytical tractability of the saddlepoint approximation, the framework enables efficient gradient-based optimization of such parameters for common memoryless channels, including the additive white Gaussian noise, binary symmetric, and binary erasure channels.
We further propose a refined decoding rule that extends the conventional fixed-threshold rule and leads to a tighter achievability bound. 
Numerical results demonstrate that our framework provides near-optimal decoding configurations at low computational cost. 
Moreover, the results from our refined rule demonstrate that the fixed-threshold decoding rule is restrictive and that achievability bounds can be further tightened.
\end{abstract}
\begin{IEEEkeywords}
Variable-length stop-feedback codes, sparse feedback, saddlepoint approximation, short packet transmission.
\end{IEEEkeywords}

\section{Introduction}

Variable-length stop-feedback (VLSF) codes, in which transmission continues until an acknowledgment is received, are central to many feedback schemes and play an important role in channel coding~\cite{forney1968exponential, polyanskiy2011feedback}. 
For short-packet communication, VLSF codes can improve the capacity-achieving (first-order) rate and eliminate the finite-blocklength (second-order) penalty when stop feedback is available after every channel use and no delay constraints are imposed~\cite{polyanskiy2011feedback}. 
In practice, however, these gains may be restricted by feedback limitations, such as delay constraints~\cite{polyanskiy2011feedback}. 
For instance,  \cite{altuug2015variable} shows that probabilistic delay constraints allow for second-order improvements but do not eliminate the penalty. 
Similarly, studies of noisy feedback channels~\cite{ostman2020short} suggest that, in such settings, fixed-length codes without feedback may be preferable.

Practical VLSF codes are constrained by the cost of reserving feedback channel resources for possible transmission after each channel use, even when the resources are idle. 
This motivates sparse VLSF codes, which perform decoding and send stop feedback only at a small number of instants~\cite{williamson2015variable}. 
These schemes are widely implemented through hybrid automatic repeat request with incremental redundancy~\cite{dahlman20205g}. 
While \cite{williamson2015variable, kim2015variable} examine periodic sparse VLSF codes, more recent studies~\cite{vakilinia2016optimizing, yavas2023variable, yang2022vlsf, yang2022incremental} focus on optimizing the decoding instants.
These works show that the achievability bound of sparse VLSF can approach that of dense VLSF with only a few well-chosen decoding instants. 
In~\cite{vakilinia2016optimizing}, random stopping times are approximated to be Gaussian-distributed, and a sequential differential optimization (SDO) algorithm is proposed for optimizing decoding instants.
Building on this approach, the authors of~\cite{yavas2023variable} derive non-asymptotic achievability bounds for moderate blocklengths, while leaving the development of tight bounds for short packets as an open challenge.

To address the  short-packet regime, the authors of~\cite{yang2022vlsf, yang2022incremental} formulate achievability as a two-step optimization: first optimizing the decoding instants using a refined SDO algorithm; then optimizing the decoding threshold through a search. 
To apply the SDO algorithm, the information-density distribution for the binary-input additive white Gaussian noise (BI-AWGN) channel is approximated using Edgeworth~\cite{hall2013bootstrap} and Petrov~\cite{petrov2012sums} expansions, with different approximations employed across different regimes.
For the binary symmetric (BSC) and binary erasure (BEC) channels, the information density distributions are computed exactly. 
While this two-step approach is effective, the search step incurs nontrivial computational costs. This motivates a unified framework for joint optimization of the decoding instants and threshold.
The main contributions are summarized as follows:
\begin{itemize}
	\item We propose a refined decoding rule that anticipates the final decoding attempt by employing maximum-likelihood decoding, with the achievability bound determined by fixed-blocklength bounds~\cite{polyanskiy2010channel}. This yields a tighter bound than schemes relying solely on fixed thresholds.
    \item A saddlepoint approximation~\cite{butler2007saddlepoint} is proposed to characterize the probability distribution of the information-density for  a variety of channels, e.g. AWGN, BSC, and BEC. Analytical expressions that support efficient gradient-based optimization are then obtained and can be solved via a gradient-based mixed-integer nonlinear programming (MINLP) solver, e.g., Juniper~\cite{kroger2018juniper}.
	\item The proposed framework yields optimal or near-optimal decoding instants, substantially improving both optimization efficiency and generality across different channels. 
\end{itemize}
The remainder of this paper is organized as follows. Section~\ref{sec:problem} presents the problem and saddlepoint approximation. Section~\ref{sec:optimization} details the optimization method. Section~\ref{sec:results} discusses numerical results, and Section~\ref{sec:conclusion} concludes our contribution.

\section{System Model}\label{sec:problem}

\subsection{Sparse VLSF codes}

Consider a memoryless channel $P_{Y|X}$ with input $X\sim P_X$ and output $Y$ taking values in $\mathcal{X}$ and $\mathcal{Y}$, respectively.
For a blocklength $n\in \mathbb{N}_{>0}$, the input-output sequence is denoted by 
\begin{equation}
	({x}^n,{y}^n) = \big( (x_1, y_1), \dots, (x_n, y_n) \big) \in \mathcal{X}^n\times  \mathcal{Y}^n.
% \vspace{-0.1cm}
\end{equation}
The information density function is defined as 
\begin{equation}
	\imath_n: 
	\begin{function}
		{\mathcal{X}^n\times \mathcal{Y}^n}{ \mathbb{R}}{({x}^n, {y}^n)}{\log \frac{{\mathrm{d}}P_{Y^n|X^n=x^n}}{{\mathrm{d}}P_{Y^n}}(y^n),}
	\end{function}
\vspace{-0cm}
\end{equation}
where $P_{Y^n|X^n} = (P_{Y|X})^n$ and $P_{Y^n}=P_Y^{n}$ for the memoryless channel, with $P_Y$ the marginal on $\mathcal{Y}$ induced by $P_{Y|X}P_X$. 
We assume that the Radon–Nikodym derivative $\frac{{\mathrm{d}}P_{Y|X=x}}{{\mathrm{d}}P_Y}$ exists for $P_X$-almost every $x$ and the information density function can be written as 
\vspace{-0cm}
\begin{equation}
    \imath_n ({x}^n, {y}^n) = \sum_{i=1}^n\log\frac{{\mathrm{d}}P_{{Y}|{X}={x_i}}}{{\mathrm{d}}P_{{Y}}}(y_i).
\vspace{-0cm}
\end{equation}
When the blocklength $n$ is clear from context, we omit the subscript and write $\imath ({x}^n, {y}^n)$.

A sparse VLSF code is specified by $(\mathcal{T}, M, \epsilon)$, in which $\mathcal{T}\triangleq \{n_1, \dots, n_t\}\in \mathbb{N}_{>0}$ with $t=|\mathcal{T}|$ is the set of admissible decoding instants, $M$ is the number of messages, and $\epsilon$ is the average error probability constraint. 
A message, denoted by $W$, is uniformly distributed over the message set $\mathcal{M}\triangleq\{1, \ldots, M\}$.
A random codebook with independent and identically distributed (i.i.d.) codewords of length $n_t$ is a tuple
\begin{equation}
    \mathcal{C} \triangleq \big({X}^{n_t}_1,  \dots,  {X}^{n_t}_M \big), \quad  {X}^{n_t}_m\sim P_X^{ n_t}, \quad m\in \mathcal{M}.
\end{equation}

Let $\gamma>0$ denote a threshold. 
At each decoding instant $n \in \mathcal{T}$, the decoder evaluates the information density for all codewords and proceeds as follows: (1) it continues decoding if no codeword's information density exceeds $\gamma$; (2) it makes a decision if exactly one codeword's information density exceeds $\gamma$, in which case the index of this codeword is declared; (3) it declares an error if more than one codeword's information density exceeds $\gamma$. 
For each $m\in \mathcal{M}$, define the stopping time of the $m$-th codeword as: 
% \vspace{-0.3cm}
\begin{equation}
    \tau_m \triangleq \min\{n\in \mathcal{T}: \imath ({X}^{n}_m, {Y}^n)\geq \gamma\}.
% \vspace{-0.1cm}
\end{equation}
The stopping time of the decoder is
\vspace{-0cm}
\begin{equation}
    \tau^* \triangleq \min_{m\in \mathcal{M}} \tau_{m},
\vspace{-0cm}
\end{equation}
with the convention $ \tau^* = n_t$ if no codeword reaches the threshold by the final decoding instant $n_t$.
The achievable rate of the code is defined as the ratio of the message size (in bits) to the expected minimum stopping time~\cite{polyanskiy2011feedback}
\vspace{-0cm}
\begin{equation}
    r = {\log_2(M)/}{\mathbb{E}[ \tau^* ]}. 
\vspace{-0cm}
\end{equation}

Yavas \emph{et al}. \cite[Theorem 1]{yavas2023variable} established an achievability bound for sparse VLSF codes.
Building on this result, Yang \emph{et al}. \cite{yang2022incremental} obtained the maximal achievability bound by solving the following optimization problem, referred to as P.~\ref{opt:literature}
\begin{mini!}
	{\gamma, \mathcal{T}}{n_1 + \sum_{j=1}^{|\mathcal{T}|-1}(n_{j+1}-n_{j})\mathbb{P}[\imath({X}^{n_j}_1,{Y}^{n_j})< \gamma], }{\label{opt:literature}}{ \label{obj:literature} }
	\addConstraint{\mathbb{P}[\imath({X}^{n_t}_1,{Y}^{n_t})< \gamma] + (M-1)e^{-\gamma}}{\leq \epsilon\label{con:decoding_error_literature}},
    % \vspace{-0.1cm}
\end{mini!}
where \eqref{obj:literature} is an upper bound on the expected decoding time $\mathbb{E}[\tau^*]$ under the threshold decoding rule. The error probability constraint~\eqref{con:decoding_error_literature} is determined by the final decoding attempt.
Specifically, $(M-1)e^{-\gamma}$ is an upper bound on the probability that any non-transmitted codeword exceeds the threshold before the transmitted one (false-alarming), while $\mathbb{P}[\imath ({X}^{n_t}_{1},{Y}^{n_t}) < \gamma]$ represents the probability that the information density of the transmitted codeword, assuming it is $X_1^n$, fails to reach the threshold at the final decoding attempt (missed-detection). 
This formulation also accounts for intermediate decoding attempts, since the false-alarming term $(M-1)e^{-\gamma}$ is already included in the constraint for each prior decoding attempt. 
Note that this error probability constraint aligns with the one used to determine an achievability bound for the fixed-blocklength codes, i.e., Shannon's bound~\cite[Theorem 2]{polyanskiy2010channel}.%\cite{shannon1957certain}.

Since no further transmissions occur after the final decoding attempt,
%the decoder can employ a different decision rule at that stage instead of thresholding.
we propose a two-stage decoding strategy: during the intermediate decoding attempts, the decoder applies the threshold-based rule, while at the final decoding attempt, it selects the codeword with the maximal information density. 
For this final decoding attempt, the average error probability $\epsilon^*_{fb}(n_t, M)$ for a fixed blocklength $n_t$ and codebook size $M$ was investigated via random coding in \cite{polyanskiy2010channel}. 
We now formulate the optimization problem for the refined decoding rule, which we refer to as P.~\ref{opt:refined}:
\begin{mini!}
	{\gamma, \mathcal{T}}{n_1 + \sum_{j=1}^{|\mathcal{T}|-1}(n_{j+1}-n_{j})\mathbb{P}[\imath({X}^{n_j}_{1},{Y}^{n_j})< \gamma],}{\label{opt:refined} }{}
	\addConstraint{ (M-1)e^{-\gamma}+\epsilon^*_{fb}(n_t, M)}{\leq \epsilon \label{con:decoding_error}}
\end{mini!}

As shown in P.~\ref{opt:literature} and P.~\ref{opt:refined}, the key component is the CDF $\mathbb{P}[\imath({X}^{n_j}_1, {Y}^{n_j})< \gamma]$ at any decoding instant $n_j\in \mathcal{T}$. For notational convenience, for $n\in \mathbb{N}_{>0}$ we define 
\vspace{-0cm}
\begin{equation}
    S_{n} \triangleq \imath({X}^{n}_1, {Y}^n) \stackrel{}{=} \sum_{i=1}^{n}\imath({X}_{1,i}, Y_i) = \sum_{i=1}^{n}Z_i,
\vspace{-0cm}
\end{equation}
where $Z_i\triangleq \imath({X}_{1,i},{Y})$ denotes the single-letter information density of the transmitted codeword $X_1^{n_t}$ at time $i$.
We also write $Z\triangleq \imath({X}_{1},{Y})$ for a generic single-letter information density.

\subsection{Saddlepoint approximation}

Next, we introduce the saddlepoint approximation of $\mathbb{P}[S_n < \gamma]$.
Let $K_Z(s)$ denote the cumulant generating function (CGF) of $Z$. 
For $S_n = \sum_{i=1}^{n}Z_i$, the CGF is $K_{S_n}(s) = nK_Z(s)$ since $Z_i$ are i.i.d.
The saddlepoint $\hat{s}$ is then computed by solving $K_{S_n}'(\hat{s})=\gamma$, where $K_{S_n}'(s)$ is the first derivative of the CGF.

If $Z$ has a density, the CDF of $\mathbb{P}[S_n < \gamma]$ for each $n\in \mathbb{N}_{>0}$ can be accurately approximated using the Lugannani-Rice saddlepoint approximation \cite[1.2.1]{butler2007saddlepoint}\cite{lugannani1980saddle}
% \vspace{-0cm}
\begin{equation}\label{eq:saddle_continuous}
	\mathbb{P}[S_n < \gamma] \approx \begin{cases}
		\Phi(\hat{w})+\phi(\hat{w})(1/\hat{w}-1/\hat{u}) & \text{if}~\gamma\neq \mathbb{E}[S_n],\\
		\frac{1}{2} + \frac{K'''(0)}{6\sqrt{2\pi}K''(0)^{3/2}} & \text{if}~\gamma =  \mathbb{E}[S_n],
	\end{cases}
% \vspace{-0.1cm}
\end{equation} 
where $\hat{w} = \text{sgn}(\hat{s})\sqrt{2(\hat{s}\gamma-K(\hat{s}))}$ and $\hat{u} = \hat{s}\sqrt{K''(\hat s)}$. The functions $\phi$ and $\Phi$ denote the standard normal probability density and CDF, respectively, and $\text{sgn}$(.) is the sign function. 

For discrete $Z$, a first-order continuity correction~\cite[1.2.3]{butler2007saddlepoint} is applied to account for the lattice structure of $S_n$.  
Let $k$ denote the smallest attainable lattice point above $\gamma$ and the saddlepoint $\hat{s}$ is then computed by solving $K_{S_n}'(\hat{s}) = k$. $\mathbb{P}[S_n < \gamma] = \mathbb{P}[S_n < k]$ is approximated by the discrete saddlepoint formula:
% \vspace{-0.3cm}
\begin{equation}\label{eq:discrete_saddle_point}
	\mathbb{P}[S_n < \gamma]\! \approx\! \begin{cases} 
		\Phi(\hat{w})\!+\!\phi(\hat{w})(1/\hat{w}\!-\!1/\bar{u}) & \!\!\text{if}~k\neq  \mathbb{E}[S_n],\\
		 \frac{1}{2} \!+ \!\frac{1}{\sqrt{2\pi}}\big(\frac{K'''(0)}{6K''(0)^{3/2}}\! - \!\frac{1}{2\sqrt{ K''(0)}}  \big) &\!\! \text{if}~k =   \mathbb{E}[S_n],
	\end{cases}
\end{equation} 
where $\hat{w}$ is defined as in the continuous case, and the curvature correction term is given by $\bar{u} = (1-e^{-\hat{s}})\sqrt{K''(\hat s)}$. 

For both the continuous and discrete saddlepoint approximations, when $\gamma\neq \mathbb{E}[S_n]$ in \eqref{eq:saddle_continuous}  or $k\neq\mathbb{E}[S_n]$ in \eqref{eq:discrete_saddle_point}, the quantity $\Phi(\hat{w})+\phi(\hat{w})(1/\hat{w}-1/\hat{u})$ is smooth with respect to the saddlepoint $\hat{s}$, since $\Phi$, $\phi$, $\hat w$, and $\hat u$ are smooth functions of $\hat{s}$.
The special cases at $\gamma=\mathbb{E}[S_n]$  or $k=\mathbb{E}[S_n]$  are defined by continuation, and hence the saddlepoint approximation is defined everywhere.

\section{Decoding Schedule Optimization}\label{sec:optimization}

In this section, we show that for the AWGN, BSC, and BEC channels, the CDF $\mathbb{P}[S_n < \gamma]$ is continuous w.r.t. the channel parameters, enabling the application of gradient-based methods to solve the optimization problems.

\subsection{AWGN channel}

For the AWGN channel $Y=X+N$, where $X \sim \mathcal{N}(0, p_0)$ and $N \sim \mathcal{N}(0, 1)$, and where $p_0$ denotes the normalized SNR\footnote{It is known that i.i.d. Gaussian signaling exhibits a larger dispersion in the finite-blocklength regime than the dispersion-optimal distribution~\cite{molavianjazi2014unified}. We adopt i.i.d. Gaussian signaling for analytical tractability, as it provides a closed-form CGF required for the saddlepoint approximation.}. Here, $\mathcal{N}$ denotes the normal distribution. 
The information density  is 
\vspace{-0cm}
\begin{equation}\label{eq:awgn}
Z = \frac{1}{2}\log\left(1+{p_0}\right)+\frac{1}{2}\left( \frac{Y^2}{p_0+1} - {(Y-X)^2}\right).
\end{equation}
Let  $\tilde{Z}=Z-\frac{1}{2}\log\left(1+{p_0}\right)$ denote the shifted information density of $Z$ and define $\tilde{S}_n = \sum_{i=1}^{n}\tilde{Z}_i$. The corresponding shifted threshold at time $n$ is:
\vspace{-0cm}
\begin{equation}\label{eq:tilde_gamma_awgn}
    \tilde\gamma_n = \gamma - \frac{n}{2}\log\big(1+{p_0}\big).
\vspace{-0cm}
\end{equation}
The following lemma provides a closed-form expression for the saddlepoint:
% \vspace{-0cm}
\begin{lemma}
The CGF of $\tilde{S}_n$ is given by 
% \vspace{-0cm}
\begin{equation}
	K_{\tilde{S}_n}(s)=-\frac{n}{2}\log\left(1-\frac{p_0s^2}{p_0+1}\right),
\vspace{-0cm}
\end{equation}
and the corresponding saddlepoint $\hat{s}$ is given by 
% \vspace{-0cm}
\begin{equation}\label{eq:tilde_s_awgn}
	\hat{s} = \frac{-n+\sqrt{n^2+4\tilde{\gamma}^2(p_0+1)/(p_0)}}{2\tilde{\gamma}}. 
% \vspace{-0cm}
\end{equation}
\end{lemma}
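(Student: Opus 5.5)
The plan has two parts. First compute the single-letter moment generating function $\mathbb{E}[e^{s\tilde Z}]$ in closed form, then multiply its logarithm by $n$ using the i.i.d.\ structure. Second, solve the saddlepoint equation $K'_{\tilde S_n}(\hat s)=\tilde\gamma_n$ explicitly and select the correct root.

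\emph{Step 1 (Gaussian quadratic form).} From \eqref{eq:awgn}, since $Y-X=N$,
\[
\tilde Z=\tfrac12\bigl(U^2-N^2\bigr),\qquad U\triangleq Y/\sqrt{p_0+1}.
\]
Because $X$ and $N$ are independent Gaussians, $(U,N)$ is a zero-mean jointly Gaussian pair. Each component has unit variance, and the correlation is $\rho=\mathbb{E}[UN]=1/\sqrt{p_0+1}$. So $\tilde Z=v^{\top}Av$ with $v=(U,N)^{\top}$, covariance $C=\begin{pmatrix}1&\rho\\ \rho&1\end{pmatrix}$ and $A=\mathrm{diag}(1/2,-1/2)$.

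I will use the standard identity $\mathbb{E}[e^{s\,v^{\top}Av}]=\det(I-2sCA)^{-1/2}$. It holds on the interval of $s$ where $I-2sCA$ keeps positive eigenvalues (positive determinant, i.e.\ $s^2<(p_0+1)/p_0$). A direct computation gives
\[
I-2sCA=\begin{pmatrix}1-s & s\rho\\ -s\rho & 1+s\end{pmatrix},
\]
whose determinant is $1-s^2(1-\rho^2)=1-\frac{p_0s^2}{p_0+1}$. Hence
\[
K_{\tilde Z}(s)=-\tfrac12\log\Bigl(1-\frac{p_0s^2}{p_0+1}\Bigr).
\]
Since the $\tilde Z_i$ are i.i.d., $K_{\tilde S_n}=nK_{\tilde Z}$, which is the stated CGF.

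\emph{Step 2 (saddlepoint).} Write $a=p_0/(p_0+1)$. Differentiating the CGF gives
\[
K'_{\tilde S_n}(s)=\frac{n a s}{1-a s^2}.
\]
Setting this equal to $\tilde\gamma_n$ (written $\tilde\gamma$) and clearing the denominator gives the quadratic
\[
a\tilde\gamma s^2+n a s-\tilde\gamma=0.
\]
Dividing by $a$ yields $\tilde\gamma s^2+ns-\tilde\gamma/a=0$. Its roots are
\[
s=\frac{-n\pm\sqrt{n^2+4\tilde\gamma^2/a}}{2\tilde\gamma}.
\]
Substituting $1/a=(p_0+1)/p_0$ gives exactly the displayed form \eqref{eq:tilde_s_awgn} once the $+$ root is chosen.

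\emph{Step 3 (root selection), the main point to verify.} I must show the $+$ root is the one inside the domain $|s|<1/\sqrt a$ of the CGF. The two roots have product $-1/a$, so exactly one of them has modulus below $1/\sqrt a$.

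For the $+$ root, write $D=\sqrt{n^2+4\tilde\gamma^2/a}$. Then $|{-n}+D|<2|\tilde\gamma|/\sqrt a$ is equivalent to $D<n+2|\tilde\gamma|/\sqrt a$. This holds because squaring the right-hand side gives $n^2+4\tilde\gamma^2/a+4n|\tilde\gamma|/\sqrt a$, which exceeds $D^2$.

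Finally, $K'$ is strictly increasing on this domain and maps it onto $\mathbb{R}$, so this is the unique saddlepoint. The sign of $\hat s$ matches the sign of $\tilde\gamma$, as required for $\hat w$. The case $\tilde\gamma=0$, where $\hat s=0$, is recovered as the limit of \eqref{eq:tilde_s_awgn}.
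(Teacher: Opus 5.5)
Your proof is correct and follows the paper's own argument. Like the paper, you write $\tilde Z=\frac12(V^2-W^2)$ as a quadratic form in correlated standard normals with $\rho=1/\sqrt{p_0+1}$, apply the Gaussian quadratic-form MGF $\det(I-2sCA)^{-1/2}$, scale the CGF by $n$, and solve the resulting quadratic for the root in the region of convergence. Your root-selection step is more careful than the appendix, which only asserts uniqueness and states the region one-sidedly as $s<\sqrt{(p_0+1)/p_0}$; you correctly use $|s|<\sqrt{(p_0+1)/p_0}$, verify that the $+$ root lies inside it, and show that $K'$ is a bijection there.
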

\begin{proof}
	See Appendix~\ref{app:AWGN}
\end{proof}
Substituting this closed-form saddlepoint into~\eqref{eq:saddle_continuous} allows $\mathbb{P}[S_n < \gamma]$ to be evaluated as a closed-form function with respect to $n$ and $\gamma$. 
\begin{figure}
	\includegraphics[width=\linewidth]{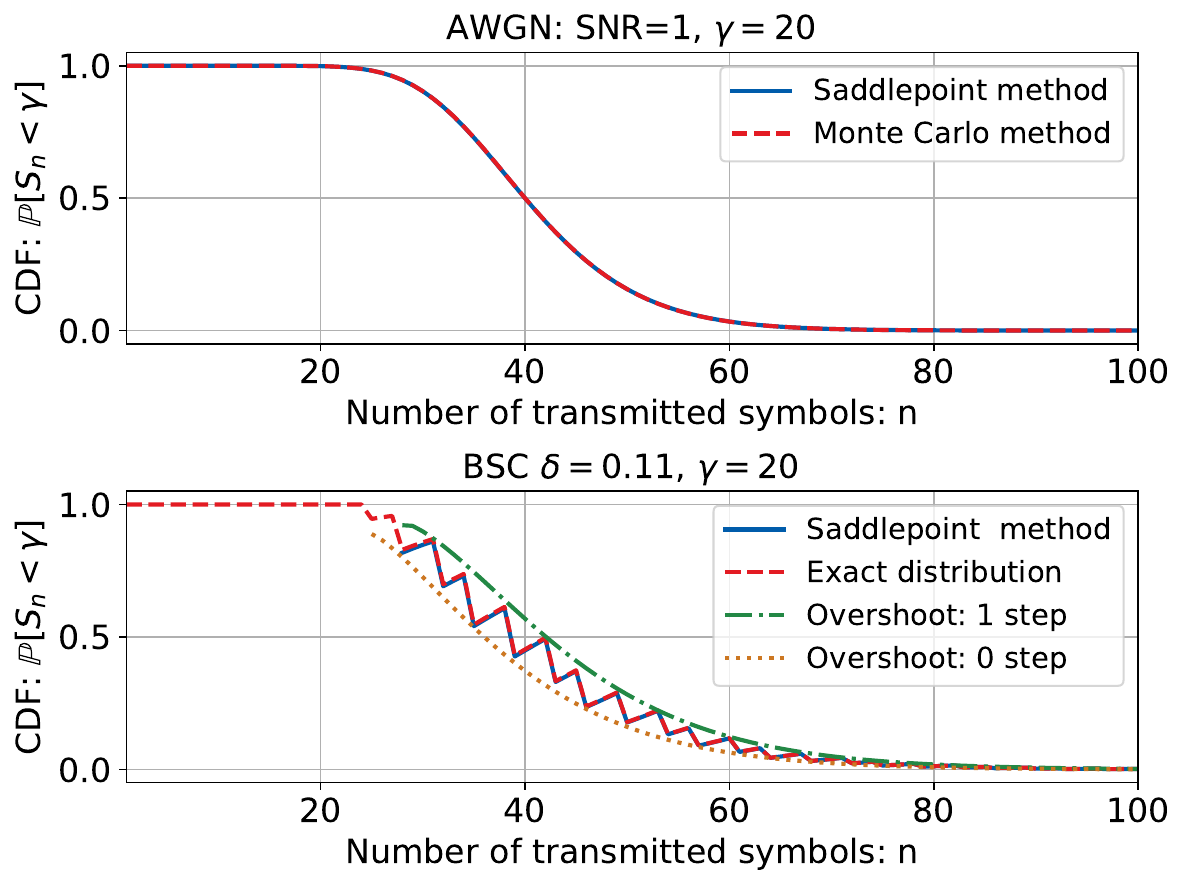}
	\caption{Saddlepoint approximation versus the true distribution. For the AWGN channel, the true distribution is estimated by Monte Carlo simulation, and for the BSC it is computed exactly. For the BSC with discrete $S_n$, overshoot at threshold crossings is between zero and one step; Saddlepoint approximations for both fixed overshoot extremes are shown.}
	\label{fig:Saddlepoint_matching}
\end{figure}
As illustrated in the upper subfigure of Fig.~\ref{fig:Saddlepoint_matching}, the tail distribution obtained via the saddlepoint approximation closely matches the true tail distribution computed through extensive Monte Carlo simulations ($10^6$ trials).  
This highlights both the accuracy and the inherent smoothness of the saddlepoint approximation.
This continuity allows the optimization problem to be solved using a standard gradient-based MINLP solver, which relaxes $n\in \mathbb{N}_{>0}$ to $n\in \mathbb{R}$ and then applies a local search.  
The corresponding results are presented in the next section.

\subsection{BSC or BEC}\label{sec:overshoot}

For a BSC with crossover probability $\delta\in (0, \frac{1}{2})$ and input $X \sim \text{Bernoulli}(\frac{1}{2})$, the information density of each transmitted symbol takes two values:
\begin{equation}\label{eq:info_density_BSC}
	Z =\begin{cases}
		\log 2(1-\delta) 	& \text{w.p.}~1-\delta,\\
		\log 2\delta & \text{w.p.}~\delta,
	\end{cases}
\vspace{-0cm}
\end{equation}
where “w.p.” means “with probability.”  
Define the shifted information density $\tilde{Z} \triangleq Z - \log 2\delta$. At time $n$,  $\tilde{S}_{n} = \sum_{i=1}^{n} \tilde{Z}_i = \sum_{i=1}^nZ_i -n\log 2\delta$. 
The CGF of $\tilde{S}_n$ is 
\begin{equation}
	K_{\tilde{S}_n}(s) = n\log\big(\delta + (1-\delta)e^{\log\frac{1-\delta}{\delta} s}\big). 
\end{equation}
Recall that $k$ is the smallest attainable lattice point exceeding $\gamma$. After shifting, the corresponding lattice point becomes $k - n \log 2\delta$.
Solving $K_{\tilde{S}_n}'(s) = k -n\log 2\delta $ gives a closed form expression for the saddlepoint:
\begin{equation}\label{eq:tilde_s_BSC}
	\hat{s} = \log\bigg(\frac{(k/n-\log 2\delta)\delta}{(\log\big(2(1-\delta)\big)-k/n)(1-\delta)}\bigg)\Big/\log\left(\frac{1-\delta}{\delta}\right).
\end{equation}
Substituting $\hat{s}$ into \eqref{eq:discrete_saddle_point} allows $\mathbb{P}[S_n < \gamma]$ to be expressed in closed-form as a function of $n$ and the overshooting lattice point $k$.

Similarly, for the BEC with erasure probability $\delta$ under the input $X \sim \text{Bernoulli}(\frac{1}{2})$, the information density $Z$ of each transmitted symbol is a discrete random variable:
\begin{equation}
	Z =\begin{cases}
		1 	& \text{w.p.}~1-\delta,\\
		0  & \text{w.p.}~\delta.
	\end{cases}
\end{equation}
which can be viewed as a special case of the BSC in \eqref{eq:info_density_BSC}, and we omit the derivation of the saddlepoint for BEC.

{For BSC and BEC, the cumulative information density $S_n$ is defined on a lattice.
Consequently, the smallest lattice point $k$ exceeding $\gamma$ results in an overshoot $D\triangleq k-\gamma$, where $D$ is nonnegative but smaller than one step.
For BSC with $\gamma\in \mathbb{R}$, we have $D\in \left[0, \log\big(\frac{1-\delta}{\delta}\big) \right)$.
To enable gradient-based MINLP optimization, we first apply a continuous correction for approximating the overshoot.
Since the overshoot is bounded by one step, the actual CDF lies between the CDF approximations without overshoot $\mathbb{P}[S_n<\gamma]$ and $\mathbb{P}\left[S_n<\gamma+\log\big(\frac{1-\delta}{\delta}\big)\right]$, as shown in the lower subfigure of Fig.~\ref{fig:Saddlepoint_matching}. 
This continuous correction ensures that the saddlepoint approximation is smooth in $n$ and $\gamma$, and hence suitable for gradient-based optimization. 
We first solve the continuous relaxation for both boundary values $D=0$ and $D=\log\big(\frac{1-\delta}{\delta}\big)$, following the approach used for AWGN channels. 
We then refine the solution via a local search within these decoding ranges to obtain the optimal parameters.}

\subsection{Instability of saddlepoint approximation}

A technical difficulty for applying the saddlepoint approximation, specifically for discrete channels, is the numerical instability near the mean, i.e., when $\gamma \approx \mathbb{E}[S_n]$~\eqref{eq:discrete_saddle_point}. 
In this region, the saddlepoint $\hat{s}$, obtained by solving, $K'(\hat{s})=\gamma$ is approximately zero.
Then, $\hat{w}=\text{sgn}(\hat{s})\sqrt{2(\hat s \gamma-K(\hat{s}))}$ and $\bar{u}={1-e^{-\hat{s}}}\sqrt{K''(\hat{s})}$ are also approximately zero. 
The difference $\frac{1}{\hat{w}} - \frac{1}{\bar{u}}$ in  \eqref{eq:discrete_saddle_point} is then numerically ill-conditioned, and the impact of approximating overshoot is huge. 
To address this, we apply a second-order Taylor expansion of the CGF $K(\hat{s})$ for $\hat{s}$ around 0: 
\vspace{-0cm}
\begin{equation}
	K(\hat s) = \mu \hat{s}+\frac{1}{2}\sigma^2\hat{s}^2 + O(\hat{s}^3),
\vspace{-0cm}
\end{equation}
where $\mu$ and $\sigma^2$ are the mean and variance of $S_n$. 
Then $K'(s)\approx \mu + \sigma^2\hat{s}$ and solving $K'(\hat s) = \gamma$ gives $\hat s\approx \frac{\gamma-\mu}{\sigma^2}$.
Substituting into $\hat{w}$, we have
\vspace{-0cm}
\begin{equation}
	\hat{w}=\sqrt{2(\hat s \gamma-K(\hat s))} 
	\approx\frac{\gamma-\mu}{\sigma}.
\vspace{-0cm}
\end{equation}
So that near the mean, the CDF of $S_n$ can be well approximated by $\Phi(\frac{\gamma-\mu}{\sigma})$. 
The CDF is then given by
\vspace{-0cm}
\begin{equation}
	\mathbb{P}[S_n< \gamma]	= \begin{cases}
		\Phi(\frac{\gamma-\mu}{\sigma}) & |\frac{\gamma-\mu}{\sigma}|\leq \epsilon_s \\
		\eqref{eq:discrete_saddle_point} & |\frac{\gamma-\mu}{\sigma}| > \epsilon_s,
	\end{cases}
\vspace{-0cm}
\end{equation}
where $\epsilon_s$ is a small empirically chosen threshold (typically $\epsilon_s=0.1$)
that separates the unstable central region from the tails, where the full saddlepoint approximation in~\eqref{eq:discrete_saddle_point} is used.
Note that the piecewise smooth approximation may introduce a discontinuity at the transition. 
This has little practical impact, as the optimizer still converges reliably given the narrowness of the corrected region, as confirmed in Section~\ref{sec:results}.
\section{Numerical Results}\label{sec:results}

We evaluate our framework by comparing the optimized decoding attempts obtained with our method against those from a brute-force search. 
The gradient-based MINLP problem is solved using the Juniper package in Julia~\cite{kroger2018juniper}.

\subsection{Evaluation of the optimization framework}
In the following, $t=3$ decoding attempts, message sizes between 30 and 120 bits, SNR $=1$, and $\epsilon=10^{-3}$ are considered, unless otherwise specified.
\paragraph{Computational efficiency}
For both AWGN and BSC channels, the gradient-based optimization is far more efficient than brute-force search: it produces results in under one second, whereas the brute-force approach can take several hours. 
Moreover, brute-force search is computationally feasible only for $t \leq 3$ due to the exponential growth of the search space, which highlights the advantage of the gradient-based method.

\paragraph{AWGN}
\begin{figure}
	\includegraphics[width=\linewidth]{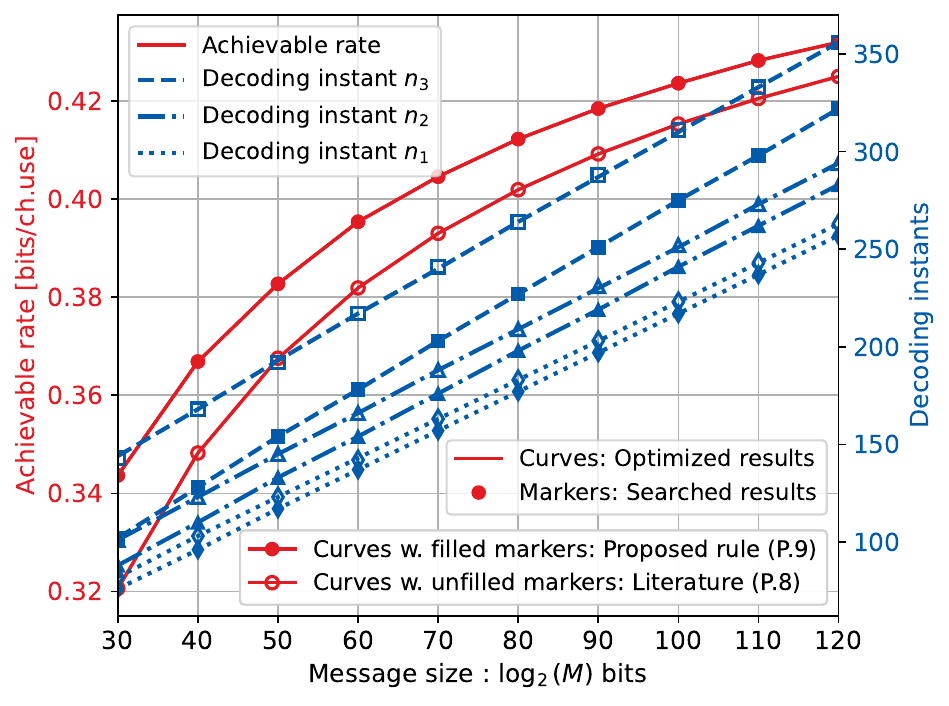}
	\caption{Achievable rate (left) and optimal decoding instants (right) versus message size (in bits) in an AWGN channel ($\text{SNR}=1$, $t=3$ attempts). 
 }
	\label{fig:match_awgn}
\end{figure}

Fig.~\ref{fig:match_awgn} presents the achievable rate (vertical left axis) and the corresponding decoding instants (vertical right axis) with respect to the message size. The gradient-based optimization (curves) closely matches the brute-force search (markers), confirming the effectiveness of our approach.

Compared with the rule in the literature (P.~\ref{opt:literature}), the proposed decoding rule (P.~\ref{opt:refined}) achieves higher rates, indicating a closer approach to the fundamental limit of sparse VLSF codes. 
This improvement is more significant for small message sizes, with nearly an 8\% gain at 30 bits, decreasing to approximately 2\% at 120 bits. Hence, the proposed rule accurately captures the performance limits for short-packet communication.
On the right axis, the optimized decoding instants under the proposed rule (filled markers) are consistently earlier than those from the literature rule (unfilled markers). The gap is especially evident at later decoding stages (e.g., $n_2, n_3$), suggesting that threshold-based decoding is restrictive in the later stages. Further work is needed to refine the decoding rule for stages that allow early stopping.

\paragraph{BSC}

\begin{figure}
\centering
	\includegraphics[width=\linewidth]{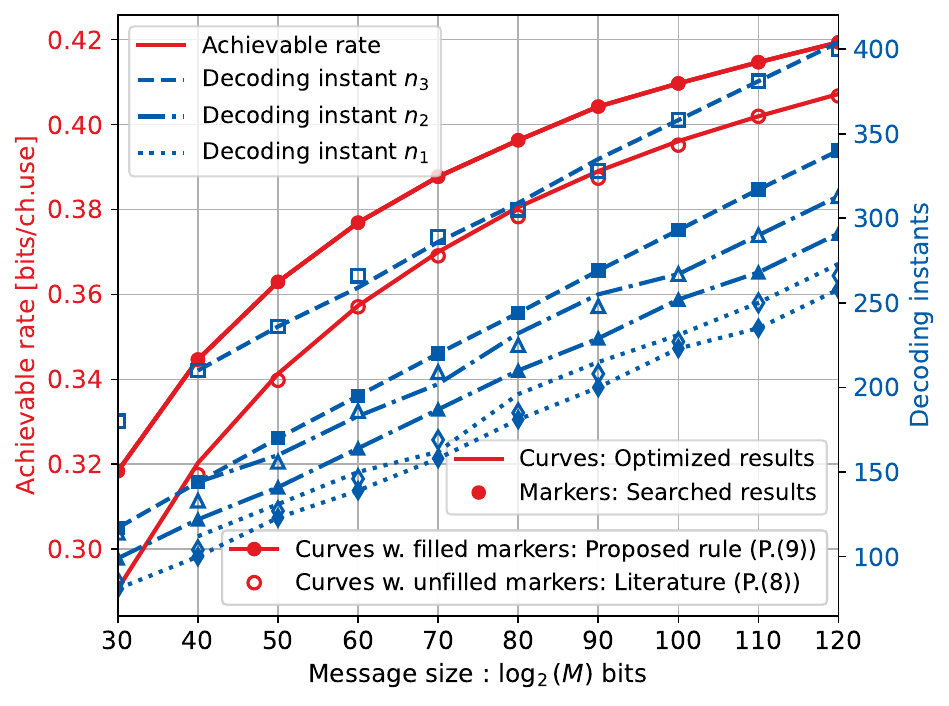}
	\caption{Achievable rate (left) and optimal decoding instants (right) versus message size (in bits) in a BSC ($\delta=0.11$, $t=3$ attempts). 
    }
	\label{fig:BSC_decoding_times}
\end{figure}

Fig.~\ref{fig:BSC_decoding_times} shows the results for a BSC with crossover probability $\delta=0.11$, corresponding to a Shannon capacity of approximately $0.5$ bit/s/Hz.

Fig.~\ref{fig:BSC_decoding_times} shows a similar advantage for the BSC as observed in the AWGN channel: the proposed decoding rule achieves higher rates as the decoding times are anticipated. 
This demonstrates that the saddlepoint approximation, combined with gradient-based optimization, provides a general approach applicable across different channels, enabling a systematic investigation of decoding rules.
However, as seen in comparison to brute-force search, gradient-based optimization does not guarantee globally optimal decoding schedules for the BSC, since gradients cannot fully capture the exact CDF of threshold crossing with overshoot (shown in the lower panel of Fig.~\ref{fig:Saddlepoint_matching}). 
Nevertheless, the results remain near-optimal; the achievable rates closely align with those from a brute-force search. Furthermore, while the decoding instants do not match perfectly as they do in the AWGN case, they remain close to the results obtained by brute-force search.

\subsection{Evaluation of the proposed achievable bound}
Next, we focus on the AWGN channel, which is of primary interest in communication problems, having noted that similar performance trends are observed for the BSC.

\begin{figure}
	\includegraphics[width=0.92\linewidth]{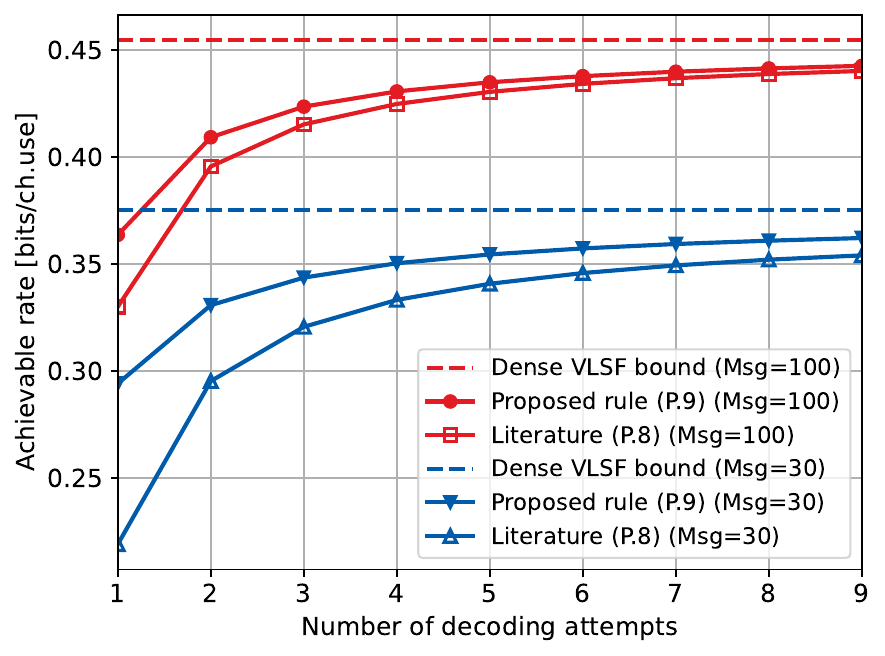}
    % \vspace{-0.3cm}
	\caption{Achievable rate of sparse VLSF codes versus number of decoding attempts.}
	\label{fig:rate_number_decoding_times}
    % \vspace{-0.5cm}
\end{figure}

\begin{figure}
	\includegraphics[width=0.93\linewidth]{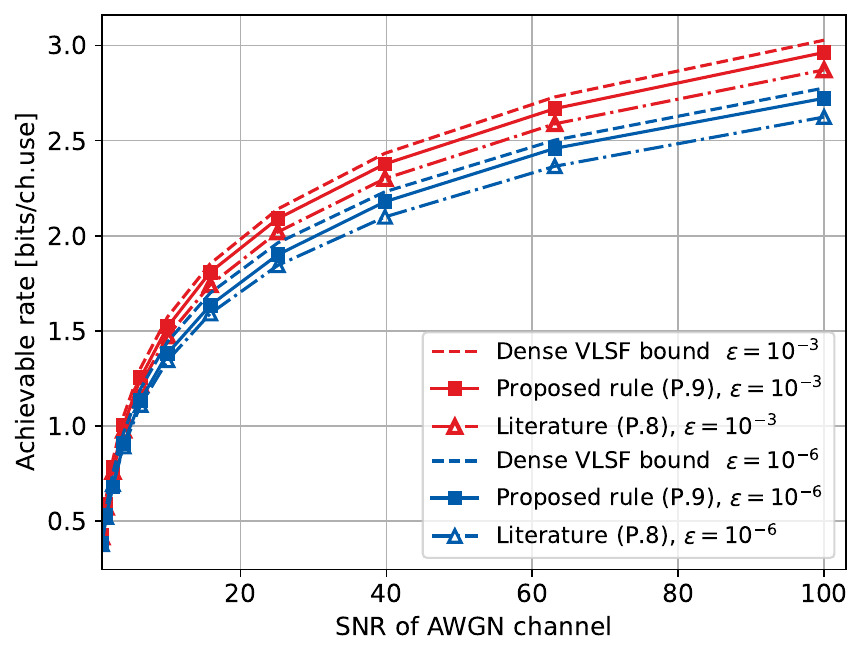}
        % \vspace{-0.3cm}
	\caption{Achievable rate with respect to SNR. The target decoding error is set as $\epsilon_0=\{10^{-3}, 10^{-6}\}$, respectively.}
	\label{fig:SNR}
        % \vspace{-0.5cm}
\end{figure}

Fig.~\ref{fig:rate_number_decoding_times} shows achievable rates versus the number of sparse decoding attempts for SNR$=1$ and message sizes of $\{30, 100\}$ bits. Note that the reference curve representing dense decoding after every received symbol is also reported~\cite{polyanskiy2011feedback}. 
The gap to the reference closes quickly, indicating that near-optimal performance can be achieved with only a few attempts. 
The proposed rule lies closer to the reference, achieving substantially better performance even with one or two attempts. 
As the number of decoding attempts increases, the gap narrows, showing that the refined rule is most significant for fewer attempts.

In general, the performance of finite blocklength codes depends on parameters such as message size, blocklength, target error probability, and channel conditions. The current framework allows us to analyze the interactions among these parameters. Fig.~\ref{fig:SNR} shows achievable rates for both decoding rules w.r.t. SNR for message size of 100 bits and $\epsilon=\{10^{-3}, 10^{-6}\}$. The gap between sparse and dense decoding widens at higher SNR, and the difference between the literature rule and the proposed rule also grows. 
Nevertheless, overall trends remain almost identical across $\epsilon$, indicating that SNR has a greater impact than the target error on sparse VLSF performance.

\section{Conclusion}\label{sec:conclusion}

This paper presents a refined decoding rule for the final decoding attempt in VLSF codes, as well as an optimization framework for decoding instants based on the saddlepoint approximation of the information density.
The resulting gradient-based framework provides a computationally efficient approach to optimizing decoding schedules and characterizing the performance of sparse VLSF codes. 
Moreover, the proposed rule significantly tightens the achievability bound.
Future work could investigate alternatives to the potentially restrictive fixed-threshold decoding scheme used in the intermediate decoding steps to achieve further rate improvements.

% \section{Acknowledgment}
% This work is supported in part by the French National Agency for Research (ANR) through the project ANR-23-CMAS-0023 of the RIS3 program and the project ANR-22-PEFT-0010 of the France 2030 program PEPR Réseaux du Futur; and in part by the Agence de l'innovation de défense (AID) through the project UK-FR 2024352.
\appendix
\subsection{Computing saddlepoint for the AWGN channel}\label{app:AWGN}

The shifted random variable $\tilde{Z}$ is given by
\begin{equation}\label{Eq:def_U}
	\tilde{Z} = \frac{1}{2}\left( \frac{Y^2}{p_0+1} - {(Y-X)^2}\right) = \frac{1}{2}\left( \frac{(X+N)^2}{p_0 + 1} - {N^2}\right).
\end{equation}
Define $V=\frac{X+N}{\sqrt{p_0 + 1}}\sim \mathcal{N}(0, 1)$, and $W={N}\sim \mathcal{N}(0, 1).$
Since both depend on the same noise variable $N$, $V$ and $W$ are correlated standard normal variables with covariance $\rho\triangleq\text{cov}(V, W)=\sqrt{\frac{1}{p_0+1}}$. 
% 
% \begin{equation}
% 	\rho\triangleq\text{cov}(V, W)=\mathbb{E}[VW] = \frac{\mathbb{E}[XN]+\mathbb{E}[N^2] }{\sqrt{p_0+1}} = \sqrt{\frac{1}{p_0+1}}.
% \end{equation}
From \eqref{Eq:def_U}, $\tilde{Z}$ is a quadratic form
\begin{equation}
	\tilde{Z}=\frac{1}{2}(V^2-W^2) = \begin{pmatrix}
		V&W
	\end{pmatrix}\begin{pmatrix}
		\frac{1}{2}&0\\0&-\frac{1}{2}
	\end{pmatrix}\begin{pmatrix}
		V\\W
	\end{pmatrix}.
\end{equation}
The moment generating function (MGF) of such a quadratic form is given by~\cite[Theorem~5.2b]{rencher2008linear}
% \vspace{-0.3cm}
\begin{equation}\label{Eq:constraint_AWGN}
\begin{aligned}
		M_{\tilde{Z}}(s) =& \left(\det \left(I-2s\begin{pmatrix}
		1&\rho\\ \rho& 1
	\end{pmatrix}\begin{pmatrix}
		\frac{1}{2}&0\\0&-\frac{1}{2}
	\end{pmatrix}\right)  \right)^{-1/2}\\
	=&\frac{1}{\sqrt{1-s^2(1-\rho^2)}},
\end{aligned}
\end{equation}
which exists for values of $s$ within the region of convergence $s < \sqrt{{(p_0+1)}/{p_0}}.$ 
Since $\tilde{S}_n=\sum_{i=1}^n \tilde{Z}_i$, the CGF of $\tilde{S}_n$ is 
\begin{equation}
	K_{\tilde{S}_n}(s)= \log \big(M_{\tilde{Z}}(s)\big)^n = -\frac{n}{2}\log\left(1-\frac{p_0s^2}{p_0+1}\right).
\end{equation}
To find the saddlepoint, we have $K'(s)= \frac{np_0s}{(p_0+1)-p_0s^2}$.
The saddlepoint $\hat{s}(\tilde{\gamma})$ is the solution to $K'(s)=\tilde{\gamma}$, which gives the unique root in the region of convergence
 \begin{equation}
	\hat{s}(\tilde{\gamma}) = \frac{-n+\sqrt{n^2+4\tilde{\gamma}^2(p_0+1)/(p_0)}}{2\tilde{\gamma}}.
\end{equation}
% where the other root is not in the region of convergence for~\eqref{Eq:constraint_AWGN}.

\bibliographystyle{ieeetr}
\bibliography{reference.bib}

\end{document}